\documentclass{article}
\usepackage[left=1in,top=1in,right=1in]{geometry}

\usepackage{amssymb, amsmath, amsthm, graphicx, hyperref, tikz}

\theoremstyle{plain}
\newtheorem{theorem}{Theorem}[section]
\newtheorem{lemma}[theorem]{Lemma}

\newtheorem{corollary}[theorem]{Corollary}

\newtheorem{conjecture}[theorem]{Conjecture}

\theoremstyle{definition}
\newtheorem{definition}[theorem]{Definition}

\theoremstyle{remark}

\counterwithin{equation}{section}
\newcommand{\np}{\operatorname{np}}
\newcommand{\p}{\operatorname{p}}

\title{Some results on Archdeacon's conjecture for rotation systems}

\author{Arahat Chikkatur\thanks{University of California Los Angeles, Los Angeles, California, USA. Email: {\tt arahatc@g.ucla.edu}.} \and Ji Zeng\thanks{Alfréd Rényi Institute of Mathematics, Budapest, Hungary. Supported by ERC Advanced Grants ``GeoScape'', no. 882971 and ``ERMiD'', no. 101054936. Email: {\tt zeng.ji@renyi.hu}.}}
\date{}

\begin{document}

\maketitle

\begin{abstract}
A rotation system on \(n\) elements assigns to each element a cyclic
order of the other \(n-1\) elements.  A four-element subset is
\emph{non-planar} if its induced rotation system cannot be realized by a
crossing-free drawing of \(K_4\). As a combinatorial strengthening
of Hill's conjecture on the crossing number of the complete graph, Archdeacon conjectured that every
rotation system on \(n\) elements has at least $H(n)=\frac{1}{4} \lfloor\frac {n}{2}\rfloor \lfloor\frac{n-1}{2}\rfloor \lfloor\frac{n-2}{2}\rfloor \lfloor\frac{n-3}{2}\rfloor$
non-planar four-element subsets.

We computationally verify Archdeacon's conjecture for $n\leq 10$ and
show that every extremal rotation system in these orders is realizable
by a simple drawing. With computer assistance, we prove that every rotation system on $n$
elements has at least $(8/9 - o(1)) H(n)$ non-planar four-element subsets.
We also present a proof by hand for a weaker lower bound of $(2/3-o(1)) H(n)$. Finally, extending recent work of
Felsner on antipodal pairs in drawings, we show that Archdeacon's conjecture holds for antipodally shellable rotation systems.
\end{abstract}

\section{Introduction}

A drawing of a graph in the plane represents its vertices by distinct
points and its edges by Jordan arcs joining their endpoints, with the
interior of an edge containing no vertex. We also assume any two edges cross properly at their internal intersections.
The crossing number \(\operatorname{cr}(G)\) is the minimum number of
crossings in a drawing of \(G\).  In the late 1950s Anthony Hill found
concrete drawings of the complete graph \(K_n\) having
\begin{equation}\label{eq:hill-number}
H(n)=\frac14
\left\lfloor\frac n2\right\rfloor
\left\lfloor\frac{n-1}2\right\rfloor
\left\lfloor\frac{n-2}2\right\rfloor
\left\lfloor\frac{n-3}2\right\rfloor
\end{equation}
crossings; the construction was published by Guy~\cite{Guy1960} and Harary--Hill~\cite{HararyHill1963}. Hill's conjecture asserts that $\operatorname{cr}(K_n)=H(n)$ for every \(n\). The conjecture is known to hold through order fourteen~\cite{Aichholzer2021}. For
arbitrary \(n\), Balogh, Lidick\'y, and Salazar~\cite{BaloghLidickySalazar2019} proved, using flag algebras, the asymptotic
lower bound
\[
\operatorname{cr}(K_n)>
  \bigl(0.98559895-o(1)\bigr)H(n).
\]

A \emph{rotation system} \(R=(p_v)_{v\in V}\) consists of a finite
ground set \(V\) and, for each \(v\in V\), a cyclic order \(p_v\) of
\(V\setminus\{v\}\).  A drawing \(D\) of \(K_n\) induces a rotation
system \(R(D)\): the order \(p_v\) is the clockwise cyclic order in
which the edges incident with \(v\) leave \(v\), recorded by their
other endpoints. A drawing is \emph{simple} (or \emph{good}) if
every two edges have at most one point in common, which is either a
common endpoint or a proper crossing.  Thus adjacent edges do not cross and two independent
edges cross at most once.  Standard local redrawing arguments show that
every crossing-minimal drawing must be simple.  We call a rotation system \emph{realizable} if it is
induced by a simple drawing.  Throughout the paper, rotation systems
are abstract combinatorial objects and need not be realizable.

\begin{figure}
\centering
\begin{tikzpicture}[x=.72cm,y=1cm,line cap=round,line join=round]
  \foreach \a/\b in {
    1/3,1/4,1/5,1/9,2/4,2/9,3/8,3/9,
    4/7,4/8,4/9,5/7,5/8,5/9,6/8,6/9}
    \draw[blue!65!black,thin] (\a,0)
      .. controls
        ({\a+(\b-\a)/3},{.04*(\b-\a)*(\b-\a)})
        and ({\a+2*(\b-\a)/3},{.04*(\b-\a)*(\b-\a)})
      .. (\b,0);
  \foreach \a/\b in {
    1/6,1/7,1/8,2/5,2/6,2/7,2/8,
    3/5,3/6,3/7,4/6,7/9}
    \draw[red!70!black,thin] (\a,0)
      .. controls
        ({\a+(\b-\a)/3},{-.04*(\b-\a)*(\b-\a)})
        and ({\a+2*(\b-\a)/3},{-.04*(\b-\a)*(\b-\a)})
      .. (\b,0);
  \draw[black,thin] (1,0)--(9,0);
  \foreach \v in {1,...,9}
    \node[circle,draw,fill=white,inner sep=1.15pt,font=\scriptsize]
      at (\v,0) {\v};
  \node[anchor=west,font=\scriptsize] at (10.65,0) {$
    \left(
    \begin{array}{cccccccc}
    2&6&7&8&9&5&4&3\\
    1&9&4&3&5&6&7&8\\
    1&9&8&4&5&6&7&2\\
    1&9&8&7&5&6&3&2\\
    1&9&8&7&6&2&3&4\\
    1&2&3&4&5&9&8&7\\
    1&2&3&6&5&4&8&9\\
    1&2&7&6&5&4&3&9\\
    1&7&8&6&5&4&3&2
    \end{array}
    \right)
  $};
\end{tikzpicture}
\caption{A drawing of \(K_9\) attaining Hill's number. The \(v\)-th row of the matrix
lists the clockwise rotation \(p_v\), begun with its least entry.}
\label{fig:hill-k9}
\end{figure}

For \(U\subseteq V\), the subsystem \(R[U]\) induced by \(U\) is
obtained by restricting \(p_v\) to \(U\setminus\{v\}\) for each
\(v\in U\).  A four-element subset (four-subset for short) \(Q\subseteq V\) is \emph{planar}
if \(R[Q]\) is induced by a crossing-free drawing of \(K_4\), and is
\emph{non-planar} otherwise.  We write \(\np(R)\) for the number of
non-planar four-subsets of \(R\).

In a simple drawing of \(K_4\), the rotation system determines whether
the drawing is crossing-free or has one crossing.  Consequently, if
\(D\) is a simple drawing of \(K_n\) and its number of crossings denoted as \(\operatorname{cr}(D)\), we have
\begin{equation}\label{eq:crossings-equal-np}
\operatorname{cr}(D)=\np(R(D)).
\end{equation}

Dan Archdeacon proposed the following conjecture, which was recorded by Arroyo, McQuillan, Richter, and
Salazar~\cite{ArroyoMcQuillanRichterSalazar2017}.

\begin{conjecture}[Archdeacon]\label{archdeacon}
Every rotation system on \(n\) elements contains at least \(H(n)\)
non-planar four-element subsets.
\end{conjecture}

Since a crossing-minimal drawing is simple,
\eqref{eq:crossings-equal-np} shows immediately that
Conjecture~\ref{archdeacon} implies Hill's conjecture.
Archdeacon's conjecture is stronger: it makes the same assertion for
all rotation systems, including those that cannot be realized by a
simple drawing.  To our knowledge, no nontrivial general lower bound
for arbitrary rotation systems was previously known.
This paper is an initial attack on this conjecture.

Our first result is a computer verification of Archdeacon's conjecture
through order ten.

\begin{theorem}\label{thm:verification}
For \(n\leq 10\), every rotation system \(R\) on \(n\) elements
contains at least \(H(n)\) non-planar four-element subsets;  moreover,
if \(\np(R)=H(n)\), then \(R\) is realizable by a simple drawing.
\end{theorem}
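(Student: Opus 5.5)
The plan is an exhaustive but heavily pruned computer search, organized as an incremental extension procedure on rotation systems up to isomorphism. The basic monotonicity tool is a double-counting inequality. Every four-subset of an $n$-element ground set lies in exactly $n-4$ of the $(n-1)$-element subsets. Hence, for any rotation system $R$ on $V$,
\[
\np(R)=\frac{1}{n-4}\sum_{v\in V}\np(R[V\setminus\{v\}])\geq \frac{n}{n-4}\,\min_{v}\np(R[V\setminus\{v\}]).
\]
Thus a lower bound $\np\ge H(n-1)$ for all systems on $n-1$ elements gives $\np(R)\ge \lceil nH(n-1)/(n-4)\rceil$. For odd $n$ one checks that $nH(n-1)/(n-4)=H(n)$, so the bound at odd $n$ follows immediately from the previous even case. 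This bookkeeping handles $n=5,7,9$ for free once $n=4,6,8$ are settled. The base cases $n\le 4$ are trivial, since $H(4)=0$.

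The even cases $n=6,8,10$ require actual search. The recursion only gives $\np(R)\ge nH(n-1)/(n-4)$, which falls short of $H(n)$. For example, at $n=10$ it gives $\lceil 10\cdot 36/6\rceil=60$, versus $H(10)=60$; I will recompute these numbers case by case, and wherever a gap appears I will argue as follows. Suppose $\np(R)<H(n)$. Then by averaging, some vertex deletion yields a system $R'$ on $n-1$ elements with $\np(R')$ at most a small excess above $H(n-1)$. So I first enumerate, up to relabeling and reflection, all rotation systems on $n-1$ elements with $\np\le H(n-1)+t$ for the required slack $t$. I obtain these recursively by extending near-extremal systems one element at a time. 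To extend, I insert a new element $w$ into every cyclic order $p_v$ in all $n-1$ possible positions, choose $p_w$ among all $(n-2)!$ cyclic orders, and prune with a branch-and-bound on the partial count of non-planar four-subsets containing $w$.

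Planarity of a four-subset is decided by a precomputed table, since there are only $2^4=16$ induced rotation systems on four elements, up to the choice of cyclic orders on three elements each. To make the choice of $p_w$ tractable, I will build $p_w$ element by element and use that each triple $\{a,b,w\}$ already forces constraints. The number of non-planar quadruples $\{w,a,b,c\}$ can be bounded below incrementally as $p_w$ is built, allowing early pruning.

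The second claim concerns realizability of extremal systems. For every system found with $\np(R)=H(n)$, I will check realizability by a simple drawing. I would first try to certify this directly by producing a drawing, for instance by matching against known Hill-type and other simple drawings of $K_n$ in the Aichholzer database. Failing that, I would apply the known characterization that a rotation system is realizable iff all its induced subsystems on at most five elements are realizable, due to Kyn\v{c}l and to \'Abrego et al. This reduces the check to a finite table lookup on 4- and 5-subsets.

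The main obstacle is the size of the search at $n=10$. Even with symmetry reduction, the number of extensions is enormous, and the pruning must be sharp. I expect to need a careful choice of slack $t$, obtained from the averaging identity applied twice (deleting two vertices), together with canonical augmentation to avoid isomorphic duplicates. Verifying the correctness of the code, for example by reproducing known counts of realizable systems for small $n$, will also be essential.
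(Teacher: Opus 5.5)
Your machinery (averaging cutoffs, recursive one-element extension of near-extremal systems up to relabelling and reversal, and Kyn\v{c}l's five-element criterion) is the paper's. However, your bookkeeping of which orders need a search is backwards, so as written the odd orders are left unproved. The identity $nH(n-1)/(n-4)=H(n)$ holds for \emph{even} $n$: $6\cdot 1/2=3=H(6)$, $8\cdot 9/4=18=H(8)$, and $10\cdot 36/6=60=H(10)$. Your own $n=10$ computation shows exactly this, which contradicts your claim that the even orders fall short. For odd $n$ the identity fails: it yields only $0<H(5)=1$, $7<H(7)=9$, and $\lceil 32.4\rceil=33<H(9)=36$. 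So $n=5$ does not follow from $H(4)=0$, and the orders that genuinely need a search are $5,7,9$. Concretely, one must enumerate all order-$9$ systems with $\np\le 36$. By averaging this requires the order-$8$ systems with $\np\le 20$, hence order $7$ with $\np\le 10$ and order $6$ with $\np\le 4$. The step $8\to 9$ is the expensive one, and the search at $n=10$ that you single out as the main obstacle is not needed at all.

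Once the parity is corrected, you also need an argument for the ``moreover'' part at the even orders. There no list of extremal systems is produced, so your plan of testing every system found with $\np(R)=H(n)$ covers nothing. The missing step is the equality case of averaging. Let $n$ be odd and let $R$ have order $n+1$ with $\np(R)=H(n+1)$. Then $(n-3)\np(R)=\sum_v\np(R\setminus\{v\})\ge (n+1)H(n)=(n-3)H(n+1)$ is tight, so every $R\setminus\{v\}$ has exactly $H(n)$ non-planar four-subsets and is realizable by the odd case. Every five-element subset of $R$ avoids some $v$, so Kyn\v{c}l's criterion makes $R$ realizable. This is how the paper obtains orders $6,8,10$ from $5,7,9$. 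Relatedly, at the searched orders your cutoffs must be $\np\le H(n)$ rather than $\np<H(n)$, so that the extremal systems are actually enumerated and can be tested.
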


Theorem~\ref{thm:verification} already gives a nontrivial asymptotic
consequence. Every rotation system $R$ on $n\geq 10$ elements satisfies \begin{equation*}
    \np(R)\geq
\frac{H(10)}{\binom{10}{4}}\binom {n}{4}
=\frac27\binom n4
=\left(0.76190476-o(1)\right)H(n).
\end{equation*}

Our second result further improves this asymptotic bound.
\begin{theorem}\label{thm:asymptotic}
Every rotation system on \(n\) elements contains at least $\left(8/9-o(1)\right)H(n)$ non-planar four-element subsets.
\end{theorem}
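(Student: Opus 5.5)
The plan is to reduce Theorem~\ref{thm:asymptotic} to a density statement and then prove that statement with a computer-certified averaging argument. Since \(H(n)=(1+o(1))\,n^4/64\) and \(\binom n4=(1+o(1))\,n^4/24\), we have \(H(n)/\binom n4\to 3/8\). The target \((8/9)\cdot(3/8)=1/3\) therefore means it suffices to show
\[
\np(R)\ \ge\ \left(\tfrac13-o(1)\right)\binom n4
\]
for every rotation system \(R\) on \(n\) elements.

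Plain averaging over \(m\)-subsets, as in the remark after Theorem~\ref{thm:verification}, only yields the ratio \(\min_R \np(R)/\binom m4\) for small \(m\), which is \(2/7\) at \(m=10\). This is too weak, so I would use a flag-algebra style argument instead. The steps, in order:

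\begin{enumerate}
\item Enumerate, up to relabelling and reversal of all rotations, the isomorphism types of rotation systems on \(4\) elements, and classify each type as planar or non-planar.
\item Enumerate the types on a small number \(m\) of elements, say \(m=6\) or \(7\). By symmetry reduction these should be few enough to handle.
\item For each type \(T\), record \(d(T)\), the fraction of its four-subsets that are non-planar. Then
\[
\frac{\np(R)}{\binom n4}\ =\ \sum_T d(T)\,p(T;R),
\]
where \(p(T;R)\) is the induced density of \(T\) in \(R\).
\item Fix a small vertex set as a root, for example one or two labelled elements. For each rooted type \(\sigma\), form the vector of densities of rooted \(m\)-element extensions. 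The quadratic forms built from products of such densities, which are nonnegative asymptotically, then give linear constraints on the unrooted densities \(p(T;R)\).
\item Solve the resulting semidefinite program, minimising \(\sum_T d(T)\,p(T)\) subject to these constraints, and aim for optimum value \(1/3\).
\item Round the numerical solution to an exact rational certificate: positive semidefinite matrices \(M_\sigma\) such that, for every \(m\)-type \(T\),
\[
d(T)\ \ge\ \tfrac13+\sum_\sigma c_\sigma(T)\cdot M_\sigma ,
\]
where \(c_\sigma(T)\) are the coefficients of \(T\) in the averaged products of rooted densities.
\item Averaging this inequality over all \(m\)-subsets of \(R\) gives the density bound up to the usual \(o(1)\) error, which comes from sampling with versus without replacement in the products.
\end{enumerate}

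The first computational ingredient is the classification in Step 1. For a four-element rotation system, planarity can be decided by checking whether the rotations are consistent with one of the two orientations of the planar embedding of \(K_4\). The same reasoning underlies \eqref{eq:crossings-equal-np}, so no geometry is needed beyond this check. All later steps are purely combinatorial and never use realizability. This is essential, because the theorem concerns arbitrary abstract rotation systems, not only those arising from drawings.

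I expect the main obstacle to be obtaining the exact value \(1/3\) with a manageable \(m\). If the SDP optimum at feasible \(m\) falls strictly below \(1/3\), I would first try enlarging the set of roots, which is cheaper than increasing \(m\). Failing that, I would add known hard constraints, such as the fact from Theorem~\ref{thm:verification} that every \(10\)-element subsystem has at least \(H(10)=60\) non-planar four-subsets, inserted as extra linear inequalities. The other delicate point is the rounding in Step 6. I would round the SDP solution to rationals in a basis of the near-kernel and then verify exactly that every \(m\)-type satisfies the inequality. A certificate that is tight on the types that appear in Hill-like constructions is plausible, since a density of exactly \(1/3\) strongly suggests tightness there.
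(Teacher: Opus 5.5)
Your reduction and framework match the paper's. Since $H(n)/\binom n4\to3/8$, it suffices to prove $\np(R)\ge(\frac13-o(1))\binom n4$, and the paper does this with a semidefinite certificate whose only loss is the $O(1/n)$ from drawing two distinct, rather than independent, extension elements. But your proposal stops short of a proof. Its whole content is the claim that an exact certificate with value $1/3$ exists. You neither exhibit one nor give a reason it should exist at the parameters you choose. The value $1/3$ is read off from the statement (``aim for optimum value $1/3$''), not derived, and any smaller SDP optimum would give a constant below $8/9$. In a computer-assisted argument of this kind the certificate is the proof, so this is the missing step, not a detail.

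The paper fills it with a specific, small setup. It takes $m=5$ and roots an \emph{ordered triple} $\theta=(v_0,v_1,v_2)$. Each further element $x$ is classified by the orientation vector $(\varepsilon_{v_0}(v_1,v_2,x),\varepsilon_{v_1}(v_0,v_2,x),\varepsilon_{v_2}(v_0,v_1,x),\varepsilon_x(v_0,v_1,v_2))\in\{0,1\}^4$. Working with all $6^5=7776$ labelled five-element systems and their $5!$ orderings sidesteps isomorphism bookkeeping. The SDP then has a rank-one solution $Q=\frac{10}{3}ss^{\mathsf T}$, with an explicit $s\in\{-1,0,1\}^{16}$ and $A=10/3$. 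This is verified exactly as the integer inequality $36\p(F)+s^{\mathsf T}C_Fs\le120$ for every $F$ (Lemma~\ref{lem:finite-k5-certificate}), and it bounds the planar density by $\frac23+O(1/n)$.

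Your suggested parameters do not obviously get there:
\begin{itemize}
\item With one root and $m\le6$, two flags glued along the root must fit in $m$ elements, so flags have at most three elements. There is only one rotation system on three elements, so this choice yields no constraint.
\item $m=7$ is out of reach: $120^7\approx3.6\cdot10^{14}$ labelled systems leave more than $3.5\cdot10^{10}$ classes even after dividing by the $2\cdot7!$ symmetries.
\item Whether two roots at $m=6$ reach $1/3$ is unknown.
\end{itemize}
Your fallback of adding Theorem~\ref{thm:verification} as constraints is vacuous at feasible $m$. Every rotation system of order at most ten already satisfies it, so it excludes no types from your program. The order-ten case enters only through its average $\np(R)\ge\frac27\binom n4$, which cannot raise an optimum below $1/3$ to $1/3$.
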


Our proof of this theorem relies on a certificate found by semidefinite
programming and verified by exhaustive computation.  We therefore also
give a second, weaker bound whose proof is entirely human-verifiable:
its only finite ingredient is the five-element case of
Theorem~\ref{thm:verification}, for which we give a short proof by hand
in Lemma~\ref{lem:five-hand}.
\begin{theorem}\label{thm:asymptotic_hand}
Every rotation system on \(n\) elements contains at least $(2/3 - o(1)) H(n)$ non-planar four-element subsets.
\end{theorem}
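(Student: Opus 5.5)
The plan is a flag-algebra-style averaging argument carried out by hand. Asymptotically $H(n)\sim n^4/64$ and $\binom n4\sim n^4/24$, so $H(n)\sim\frac38\binom n4$. The target $(2/3-o(1))H(n)$ therefore amounts to showing that a $(1/4-o(1))$-fraction of all four-subsets are non-planar. Averaging the five-element case of Theorem~\ref{thm:verification} (i.e.\ Lemma~\ref{lem:five-hand}) directly over all five-subsets only gives the fraction $1/5$. The extra ingredient has to be a second-moment (Cauchy--Schwarz) argument that exploits how the non-planar four-subsets inside a five-set are arranged, not just how many there are.

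\emph{Step 1: sharpen the five-element information.} First I would encode $R[Q]$ for a four-subset $Q$ by the orientations $\sigma_v(a,b,c)\in\{\pm1\}$ of the rotations $p_v$ restricted to triples. I would then record which sign patterns are planar, namely those coming from a triangle with one vertex inside. Using this, I would prove by hand a structural strengthening of Lemma~\ref{lem:five-hand}. The statement I am aiming for: in a five-set $F=T\cup\{u,w\}$ with $|T|=3$, if $T\cup\{u\}$ and $T\cup\{w\}$ are both planar, then at least one of the four-subsets of $F$ containing both $u$ and $w$ is non-planar. For this I would check that two planar extensions of a triple $T$ force compatible orientations at the vertices of $T$. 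The remaining three four-subsets then inherit constraints that cannot all be met by planar patterns. This is the same case analysis as in Lemma~\ref{lem:five-hand}, organised around a fixed triple.

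\emph{Step 2: count via triples.} For each triple $T$, let $d_T$ be the number of $x\notin T$ with $T\cup\{x\}$ non-planar, and set $e_T=n-3-d_T$. Then $\sum_T d_T=4\np(R)$. By Step~1, every pair $\{u,w\}$ of planar extensions of $T$ produces a non-planar four-subset containing $u,w$ and two elements of $T$. Each non-planar four-subset $Q=\{u,w,t,t'\}$ is charged in this way at most a bounded number of times per choice of the pair $\{u,w\}\subset Q$ and the third element of $T$, so overall at most $O(n)$ times. This gives
\[
\sum_T \binom{e_T}{2}\le C\,n\,\np(R)
\]
with an explicit constant $C$. Applying Cauchy--Schwarz (convexity) to $\sum_T\binom{e_T}{2}$ with $\sum_T e_T=(n-3)\binom n3-4\np(R)$ yields a quadratic inequality in $x=\np(R)/\binom n4$. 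Its solution should be $x\ge 1/4-o(1)$, which is exactly $(2/3-o(1))H(n)$.

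The main obstacle is matching the constants. The charging in Step~2 must be tight enough that the resulting quadratic inequality forces exactly $1/4$ rather than something weaker. If the plain Step~1 statement with crude charging loses too much, I would first refine the charging by weighting each non-planar $Q$ according to how many of its triples have many planar extensions. Failing that, I would strengthen Step~1 to count non-planar four-subsets of $F$ more precisely in the case where $T$ has two planar extensions, for example at least two rather than one, and redo the convexity computation with this improved bound.
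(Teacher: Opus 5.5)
Your overall framework, a second moment over triples, is sound. The gap is that the inequality you feed into it is too weak: with the charging you describe, the resulting quadratic does not force $x\ge\frac14-o(1)$.

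\emph{Step 1 and the charging constant.} Step~1 needs no new case analysis. The five four-subsets of $F=T\cup\{u,w\}$ are $T\cup\{u\}$, $T\cup\{w\}$ and the three containing both $u$ and $w$, so Step~1 is exactly Lemma~\ref{lem:five-hand}. Your charging then gives $\sum_T\binom{e_T}{2}=\sum_{S}\binom{\p(R[S])}{2}\le 6(n-4)\np(R)$. The constant $6$ is sharp for any five-set-by-five-set argument: a five-set with $\p(R[S])=4$ has six pairs of planar four-subsets but only one non-planar one.

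\emph{What this bound yields.} Put $\gamma=1-x$. Jensen gives $\sum_T\binom{e_T}{2}\ge 2\binom n4\gamma(\gamma(n-3)-1)$. Combined with your upper bound you obtain only $\gamma^2+3\gamma-3\le o(1)$, i.e.\ $x\ge\frac{5-\sqrt{21}}{2}-o(1)\approx 0.209$. That is about $0.557\,H(n)$, not $\frac23H(n)$. To reach $\frac14$ with any bound of the form $C(n-4)\np(R)$ you would need $C\le\frac92$.

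\emph{Your fallbacks.} The second one is false. In a simple drawing of $K_5$ with one crossing, exactly one four-subset is non-planar. Take $T$ to contain the vertex not incident to the two crossing edges; then $T\cup\{u\}$ and $T\cup\{w\}$ are planar and exactly one of the three four-subsets through $u,w$ is non-planar. The first fallback is too vague to evaluate.

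\emph{The missing idea.} Charge pairs of planar four-subsets to planar four-subsets rather than to non-planar ones; equivalently, use the right chord. Since $\p(R[S])\le 4$ by Lemma~\ref{lem:five-hand}, convexity of $\binom y2$ gives $\binom{\p(R[S])}{2}\le\frac32\,\p(R[S])$, from the chord through $(0,0)$ and $(4,6)$. Hence $\sum_T\binom{e_T}{2}\le\frac32(n-4)\gamma\binom n4$. Combined with your Jensen bound this gives $\gamma(n-3)-1\le\frac34(n-4)$, i.e.\ $\gamma\le\frac34+\frac{1}{4(n-3)}$, which is the theorem.

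The paper runs the mirror image of this argument. It takes the second moment of the non-planar counts $d(T)$, getting $\mathbb E\binom{\np(R[S])}{2}\ge 10\beta^2-\frac{10}{n-4}$. It then combines this with the chord $\np(R[S])\ge 1+\frac25\binom{\np(R[S])}{2}$ on $[1,5]$ and the first moment $\mathbb E\,\np(R[S])=5\beta$.
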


Hill's conjecture is known to hold for several structured classes of
drawings, including two-page, \(x\)-monotone, cylindrical,
\(x\)-bounded, shellable, bishellable, and spherical arc
drawings~\cite{AbregoEtAl2013,AbregoEtAl2014,AbregoEtAl2018,StreltsovaWagner2025}.
Most recently, Felsner introduced antipodally shellable drawings and
proved the conjecture for this class~\cite{Felsner2026}.

Two vertices \(u,v\) in a drawing are said to form an
\emph{antipodal pair} if no two edges incident with \(u\) or \(v\) cross.
Felsner proved that deleting such a pair has a particularly
well-behaved effect on the number of crossings.  We introduce the
corresponding notion for an abstract rotation system.  After cyclically
shifting \(p_u\), write \(p_u=(v,a_1,\ldots,a_{n-2})\) and let
\(p_u^v\) be the linear order \((a_1,\ldots,a_{n-2})\).
We call \(\{u,v\}\) an \emph{antipodal pair} of \(R\) if \(p_u^v\)
and \(p_v^u\) are reverses of each other.

\begin{theorem}\label{thm:antipodal}
If \(\{u,v\}\) is an antipodal pair in a rotation system \(R\) on
\(n\) elements, then
\[
\np(R)\geq \np(R\setminus\{u,v\})+H(n)-H(n-2).
\]
\end{theorem}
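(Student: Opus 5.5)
The plan is to split the non-planar four-subsets of \(R\) by how many of \(u,v\) they contain:
\[
\np(R)=\np(R\setminus\{u,v\})+N_1+N_2,
\]
where \(N_2\) counts non-planar sets \(\{u,v,x,y\}\) and \(N_1\) counts non-planar sets \(\{u,x,y,z\}\) or \(\{v,x,y,z\}\) with \(x,y,z\in W:=V\setminus\{u,v\}\). It then suffices to show \(N_1+N_2\geq H(n)-H(n-2)\). Write \(p_u^v=(a_1,\dots,a_{n-2})\), so \(p_v^u=(a_{n-2},\dots,a_1)\), and use this common linear order on \(W\) as a reference.

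\textbf{Step 1: a local description of planar four-subsets.} A crossing-free drawing of \(K_4\) is unique up to homeomorphism of the sphere. Hence a planar \(K_4\) rotation system is determined by one orientation choice: all four rotations are read off from a single embedding or from its mirror image. Using this, I would prove two criteria.
\begin{itemize}
\item For \(\{u,v,x,y\}\) with \(x\) before \(y\) in the reference order, the rotations at \(u\) and \(v\) are already mutually consistent with one fixed planar embedding, because of the antipodal reversal. So the set is planar iff \(p_x\) and \(p_y\), restricted to the set, both agree with that embedding.
\item For \(\{u,x,y,z\}\), the restriction of \(p_u\) is the cyclic order of \(x,y,z\) inherited from the reference order. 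Planarity is then a consistency condition among \(p_x,p_y,p_z\) and this orientation. The analogous condition for \(\{v,x,y,z\}\) uses the reversed orientation.
\end{itemize}

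\textbf{Step 2: encode each \(w\in W\) by a split.} Removing \(u\) and \(v\) from \(p_w\) cuts the other vertices of \(W\) into two arcs. Let \(A_w\) be the arc from \(u\) to \(v\) and \(B_w\) the arc from \(v\) to \(u\), with sizes \(k_w\) and \(n-3-k_w\). Step 1 should express the contribution of \(w\) through these arcs, together with how the elements of each arc compare with \(w\) in the reference order. The key feature of the antipodal pair is that a triple \(\{x,y,z\}\) sees opposite orientations at \(u\) and at \(v\). Consequently, whatever the rotations at \(x,y,z\) are, at least one of \(\{u,x,y,z\}\) and \(\{v,x,y,z\}\) fails to be planar, except for triples whose splits are arranged in a special compatible way. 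I would aim for a bound of the form
\[
N_1+N_2\;\geq\;\sum_{w\in W}\left[\binom{k_w}{2}+\binom{n-3-k_w}{2}\right]\Big/c \;+\; (\text{a term from } N_2),
\]
with the constant \(c\) and the extra term fixed by a careful count. This mirrors Felsner's argument for drawings, with planarity of four-subsets replacing crossings.

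\textbf{Step 3: optimize.} By convexity, each summand is minimized when \(k_w\approx (n-3)/2\). Summing over the \(n-2\) vertices of \(W\) and comparing with the closed form of \(H(n)-H(n-2)\), treating \(n\) even and odd separately, should give exactly the required bound.

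\textbf{Main obstacle.} The hardest part is the local analysis in Steps 1--2. One must show that the non-planar sets counted are not double counted, and that every pair or triple is charged correctly purely in terms of the splits \((A_w,B_w)\), with no realizability available. I would first enumerate, by hand, the planar rotation systems on \(\{u,v,x,y\}\) and \(\{u,x,y,z\}\) that are consistent with the antipodal reversal. This is a short list, since each planar \(K_4\) has only two rotation systems. From that list I would extract the charging rule.
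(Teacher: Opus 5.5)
\textbf{There is a genuine gap, at Step~2.} Your decomposition $\np(R)=\np(R\setminus\{u,v\})+N_1+N_2$ is fine. So is your Step~1 criterion: for $x$ before $y$, $\{u,v,x,y\}$ is planar iff $y\in B_x$ and $x\in A_y$. But the whole proof lives in Step~2, which you leave as a bound ``fixed by a careful count'', and the vertex splits $k_w$ cannot support the bound that Step~3 needs.

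\emph{What the splits give honestly.} If both $\{u,x,y,z\}$ and $\{v,x,y,z\}$ are planar, then $\varepsilon_v(x,y,z)=1-\varepsilon_u(x,y,z)$ and Lemma~\ref{planar} force the two remaining elements to lie in different arcs at each of $x,y,z$. A non-good triple may be charged by all three of its vertices, so this yields only $c=3$, i.e.\ roughly $\frac23\bigl(H(n)-H(n-2)\bigr)$.

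\emph{The value $c=2$ you would need is false for split sizes.} Put $m=n-2$, $p_u=(v,1,\ldots,m)$, $p_v=(u,m,\ldots,1)$ and $p_w=(u,v,w+1,\ldots,m,1,\ldots,w-1)$. Then every $k_w=0$ and every $\{u,v,x,y\}$ is non-planar. For $x<y<z$, $\{v,x,y,z\}$ is planar while $\{u,x,y,z\}$ is not. So $N_1+N_2=\binom m3+\binom m2$, which is smaller than $\frac12\sum_w\binom{m-1}2=\frac32\binom m3$ once $m\geq 9$. The underlying defect is that ``$y\in A_x$'' and ``$x\in A_y$'' are independent bits, so splits do not orient pairs.

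\textbf{The missing idea is a pairwise orientation.} Replace $v$ in $p_u$ by a symbol $\star$, giving the cyclic order $(\star,1,\ldots,m)$. Call $x$ a good complement of $\{a,b\}$ if $\{u,a,b,x\}$ and $\{v,a,b,x\}$ are both planar; call $\star$ a good complement if $\{u,v,a,b\}$ is planar. The paper's Lemma~\ref{lem:good-interval} shows that all good complements lie in just one of the two intervals into which $a,b$ cut this cyclic order. This defines a tournament $\mathcal T$ on $W$ in which every good triple is a cyclic triangle. A transitive triangle has exactly one source and one sink, so the number of non-good triples is at least
$\sum_a\binom{d_a}2=\frac12\sum_a\bigl[\binom{d_a}2+\binom{m-1-d_a}2\bigr]$, with $d_a$ the outdegree in $\mathcal T$. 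That is the valid form of your $c=2$, with $d_a$ in place of $k_w$.

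\textbf{Step~3 also fails as planned.} Minimizing each summand separately at the middle gives only $q^3-\binom{q+1}2$ for $m=2q+1$, and $q(q-1)^2=\frac12q(q-1)(2q-1)-\binom q2$ for $m=2q$. The $N_2$-term cannot supply the missing $\binom{q+1}2$ or $\binom q2$ on its own, because it can vanish: with $p_x=(u,1,\ldots,x-1,v,x+1,\ldots,m)$, every $\{u,v,x,y\}$ is planar.

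The paper couples the two terms instead:
\begin{itemize}
\item A backward edge $b\to a$ with $a<b$ forces $\{u,v,a,b\}$ to be non-planar.
\item Vertex $a$ has at most $m-a$ forward out-neighbours, hence at least $\max\{0,d_a-(m-a)\}$ backward ones.
\item Minimizing $\sum_a\bigl[\binom{d_a}2+\max\{0,d_a-(m-a)\}\bigr]$ subject to $\sum_a d_a=\binom m2$ is done by noting that each value $0,\ldots,m-1$ occurs exactly $m-1$ times among the increments. This gives exactly $H(n)-H(n-2)$.
\end{itemize}
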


A rotation system \(R\) on \(n\) elements is \emph{antipodally
shellable} if its elements can be listed as \(v_1,\ldots,v_n\) so
that, for every \(0\leq i<\lfloor n/2\rfloor\), the pair
\(\{v_{n-2i-1},v_{n-2i}\}\) is antipodal in \(R[\{v_1,\ldots,v_{n-2i}\}]\). Clearly, Theorem~\ref{thm:antipodal} has the following consequence.

\begin{corollary}\label{cor:antipodal}
Conjecture~\ref{archdeacon} holds for antipodally shellable rotation
systems.
\end{corollary}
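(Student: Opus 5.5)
The statement is Corollary~\ref{cor:antipodal}. The plan is to derive it directly from Theorem~\ref{thm:antipodal} by induction on \(n\), peeling off the antipodal pairs in the order given by the shelling. Let \(R\) be antipodally shellable with ordering \(v_1,\ldots,v_n\), and write \(R_k=R[\{v_1,\ldots,v_k\}]\). The definition gives, for each \(0\le i<\lfloor n/2\rfloor\), that \(\{v_{n-2i-1},v_{n-2i}\}\) is antipodal in \(R_{n-2i}\).

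We prove by induction on \(i\) that
\[
\np(R_{n-2i})\ge H(n-2i)\quad\text{for } i=\lfloor n/2\rfloor,\ldots,1,0 .
\]
The base case is \(n-2i\in\{0,1\}\). Here \(\np=0=H(n-2i)\), since the floor \(\lfloor (n-2i-3)/2\rfloor\) or an earlier factor vanishes. For the same reason \(H(m)=0\) for all \(m\le 3\), so the small cases cause no trouble.

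For the inductive step, suppose \(\np(R_{n-2i-2})\ge H(n-2i-2)\). Apply Theorem~\ref{thm:antipodal} to the rotation system \(R_{n-2i}\) on \(n-2i\) elements with the antipodal pair \(\{v_{n-2i-1},v_{n-2i}\}\). Deleting this pair from \(R_{n-2i}\) leaves exactly \(R_{n-2i-2}\), because induced subsystems compose. We obtain
\[
\np(R_{n-2i})\ge \np(R_{n-2i-2})+H(n-2i)-H(n-2i-2)\ge H(n-2i).
\]
Taking \(i=0\) gives \(\np(R)\ge H(n)\), which is Conjecture~\ref{archdeacon} for \(R\).

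The only points to verify are that Theorem~\ref{thm:antipodal} applies at every size \(m=n-2i\ge 2\), including the degenerate sizes where \(H(m)=H(m-2)=0\), and that the indices line up, namely \(R_{n-2i}\setminus\{v_{n-2i-1},v_{n-2i}\}=R_{n-2i-2}\). The argument is otherwise a pure telescoping induction, with all the substance carried by Theorem~\ref{thm:antipodal}.
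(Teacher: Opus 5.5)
Your proposal is correct and is essentially the argument the paper has in mind when it says the corollary follows \emph{clearly} from Theorem~\ref{thm:antipodal}: a telescoping induction that removes the shelling pairs one at a time, with \(H(m)=0\) for \(m\le 3\) taking care of the base. The only details are the index bookkeeping \(R_{n-2i}\setminus\{v_{n-2i-1},v_{n-2i}\}=R_{n-2i-2}\) and the degenerate sizes, where the inequality of Theorem~\ref{thm:antipodal} reads \(0\ge 0\); you handle both correctly.
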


The paper is organized as follows.  Section~\ref{sec:verification}
sets up the combinatorial framework, in particular the orientation
notation used throughout, and describes the computation establishing
Theorem~\ref{thm:verification}.  Section~\ref{sec:asymptotic} proves
the two asymptotic bounds, Theorems~\ref{thm:asymptotic}
and~\ref{thm:asymptotic_hand}.  Section~\ref{sec:antipodal} proves
Theorem~\ref{thm:antipodal} and Corollary~\ref{cor:antipodal}.

\bigskip \noindent{\bf Acknowledgements.}
The mathematical content of this paper was initially produced by its human authors; we
used generative AI tools to implement the algorithms and to
prepare the manuscript. During manuscript preparation, ChatGPT-5.6 found a gap in the part of our proof of Theorem~\ref{thm:antipodal} that follows Lemma~\ref{lem:good-interval} and proposed a correct fix, which is its only contribution to the mathematical argument. We wish to thank Hui Tan for introducing the two
authors to each other in the context of a research experiences for
undergraduates (REU) project.

\section{Verification for small orders}\label{sec:verification}

For computational purposes, we identify the ground set of a rotation system with
\([n]=\{1,\ldots,n\}\), endowed with its natural order.  A cyclic order
\(p_i\) on \([n]\setminus\{i\}\) has a unique linear representative
whose first entry is the smallest element of \([n]\setminus\{i\}\); we
call it \emph{normalized}.
Writing these representatives as rows gives the \emph{matrix
representation}
\[
M(R)=(m_{i,j})_{\substack{1\leq i\leq n\\1\leq j\leq n-1}}
\]
of \(R\).  We compare rotation systems lexicographically after reading
their matrices row by row.

A permutation \(\sigma\in S_n\) acts on rotation systems by
relabelling.  Namely, if \(p_i=(x_1,\ldots,x_{n-1})\), then the rotation at \(\sigma(i)\) in \(\sigma(R)\) is the cyclic
order \((\sigma(x_1),\ldots,\sigma(x_{n-1}))\), normalized as above.  Two rotation systems are \emph{isomorphic} if
one is obtained from the other by a relabelling. The \emph{global reversal} \(R^{-1}\) is obtained by reversing every
cyclic order.  Thus, for a normalized row
\(p_i=(x_1,x_2,\ldots,x_{n-1})\), its reverse is
\[
p_i^{-1}=(x_1,x_{n-1},x_{n-2},\ldots,x_2).
\]
This operation corresponds geometrically to reflecting a drawing.  We
call \(R\) and \(R'\) \emph{equivalent} if \(R'\) is isomorphic to
either \(R\) or \(R^{-1}\). All classes mentioned below are equivalence
classes in this sense. We remark that much of the literature uses ``isomorphic'' for what we
call ``equivalent''; we keep the two apart as
in~\cite{BaloghLidickySalazar2019}, whose approach we follow closely
in the next section.

Let \(w(R)\) be the word obtained by reading \(M(R)\) row by row.  The
canonical representative \(\operatorname{can}(R)\) of the equivalence
class of \(R\) is characterized by
\[
w(\operatorname{can}(R))
=\min_{\substack{\sigma\in S_n\\ \epsilon\in\{1,-1\}}}
  w\bigl(\sigma(R^\epsilon)\bigr),
\]
where \(R^1=R\).  It can therefore be computed by examining
the \(2\cdot n!\) relabellings of \(R\) and its global reversal.

We shall use the following combinatorial criterion to decide whether a
given four-subset is planar.

\begin{definition}\label{def:orientation}
Let \(q\) be a cyclic order and let \(a,b,c\) be three distinct
elements of \(q\).  The \emph{orientation}
\(\varepsilon_q(a,b,c)\in\{0,1\}\) is defined by
\[
\varepsilon_q(a,b,c)=
\begin{cases}
0,&\text{if \(a,b,c\) occur in \(q\) in the cyclic order \((a,b,c)\)},\\
1,&\text{otherwise, that is, if they occur in the cyclic order \((a,c,b)\)}.
\end{cases}
\]
If \(q=p_v\) is the rotation at an element \(v\) of a rotation system,
we abbreviate \(\varepsilon_{p_v}\) to \(\varepsilon_v\).
\end{definition}

\begin{lemma}\label{planar}
Let \(R\) be a rotation system on a four-element set, and let
\((z_0,z_1,z_2,z_3)\) be any ordering of its ground set.  Then \(R\) is
planar if and only if
\begin{equation}\label{eq:planar-orientations}
\bigl(\varepsilon_{z_0}(z_1,z_2,z_3),\;
      \varepsilon_{z_1}(z_0,z_2,z_3),\;
      \varepsilon_{z_2}(z_0,z_1,z_3),\;
      \varepsilon_{z_3}(z_0,z_1,z_2)\bigr)
\in\bigl\{(0,1,0,1),\,(1,0,1,0)\bigr\}.
\end{equation}
In particular, whether this alternating pattern occurs does not depend
on which ordering of the ground set is used.
\end{lemma}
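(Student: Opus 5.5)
We argue directly from the geometry of a crossing-free drawing of \(K_4\), and then verify the invariance claim combinatorially.

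\emph{Counting.} A rotation system on four elements is determined by the four cyclic orders of three elements each. Each such order is encoded by one orientation bit, so there are \(2^4=16\) rotation systems on a fixed ordered ground set \((z_0,z_1,z_2,z_3)\).

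\emph{Planar systems give the alternating pattern.} Take a crossing-free drawing of \(K_4\). Up to homeomorphism of the sphere there is a unique such embedding, since \(K_4\) is \(3\)-connected and Whitney's theorem applies. The plane embedding is therefore determined up to reflection and the choice of outer face. Changing the outer face does not alter rotations; reflection reverses all of them. So there are exactly two planar rotation systems for each labelled ground set, each the global reversal of the other. It remains to compute one of them. Draw \(z_0,z_1,z_2\) as a counterclockwise triangle with \(z_3\) inside. At \(z_3\) the neighbours appear in the cyclic order \((z_0,z_1,z_2)\) counterclockwise, so clockwise they appear as \((z_0,z_2,z_1)\), giving \(\varepsilon_{z_3}(z_0,z_1,z_2)=1\). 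At \(z_0\) the clockwise order of \(z_1,z_2,z_3\) is read off the picture: \(z_3\) lies angularly between \(z_1\) and \(z_2\) inside the triangle's angle. In the same way we compute the values at \(z_1\) and \(z_2\). This local bookkeeping should produce the pattern \((0,1,0,1)\), and reflection then produces \((1,0,1,0)\). So the two planar systems are exactly the two vectors in \eqref{eq:planar-orientations}.

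\emph{Independence of the ordering.} Show directly that the set of two alternating vectors is preserved when the ordering changes. Here \(S_4\) is generated by adjacent transpositions, so it suffices to check those. Swapping \(z_i\) and \(z_{i+1}\) affects the four bits as follows:
\begin{itemize}
\item In the bits at \(z_j\) with \(j\ne i,i+1\), the two swapped letters are transposed inside the triple, so those bits flip.
\item The bits at \(z_i\) and \(z_{i+1}\) swap positions. The remaining triple arguments are unchanged in relative order, since we only replace \(z_{i+1}\) by \(z_i\) in the same slot.
\end{itemize}
Apply this to an alternating vector. The two coordinates in positions \(i,i+1\) are exchanged and the other two are flipped. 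Since those other two positions have different parities, the result is again alternating. This shows the pattern is preserved.

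The main obstacle is the sign bookkeeping in the picture computation and in the transposition rule. Because only an alternating pair must be hit, a single careful check of one configuration suffices, with reflection covering the other.
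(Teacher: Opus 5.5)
Your proposal is correct and is essentially the paper's argument: uniqueness of the crossing-free \(K_4\) up to reflection, one explicit orientation computation, and the fact that the four orientation bits determine the system (a counterclockwise outer triangle \(z_0z_1z_2\) with \(z_3\) inside does give the clockwise rotations \((z_1,z_2,z_3)\), \((z_0,z_3,z_2)\), \((z_0,z_1,z_3)\), \((z_0,z_2,z_1)\) at \(z_0,\dots,z_3\), i.e.\ the vector \((0,1,0,1)\), which confirms your ``should produce''). Your adjacent-transposition check of ordering-independence is valid---the cleaner reason is that positions \(i,i+1\) of an alternating vector carry different bits, so the move complements all four bits---but it is redundant, since the geometric argument already applies to an arbitrary ordering, which is exactly how the paper concludes.
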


\begin{proof}
On the sphere a crossing-free drawing of \(K_4\) is the tetrahedral
embedding, unique up to reflection; its four faces are triangles, so we
may take \(z_0z_1z_2\) as the outer face and \(z_3\) inside.  Such a
drawing is determined by the clockwise order of \(z_0,z_1,z_2\) along
the outer triangle, and the two possibilities differ by interchanging
\(z_1\) and \(z_2\); the two correspond to the orientation patterns
\((1,0,1,0)\) and \((0,1,0,1)\).  Conversely, a cyclic order on three
elements is determined by its orientation, so the two alternating
patterns determine exactly these two systems, both of which are
planar.  The argument used no property of the ordering, so the
criterion does not depend on it.
\end{proof}

The next lemma is the five-element case of
Theorem~\ref{thm:verification}, without its ``moreover'' part.  Since
\(H(5)=1\), it is exactly the assertion that \(H(5)\) is a lower bound
at order five.  We isolate it here because, unlike the rest of
Theorem~\ref{thm:verification}, it admits a short proof by hand.

\begin{lemma}\label{lem:five-hand}
Every rotation system on five elements has at least one non-planar
four-subset.
\end{lemma}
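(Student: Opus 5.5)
The plan is to encode everything in the orientations of Definition~\ref{def:orientation} and use Lemma~\ref{planar} with the natural order. Write \(V=\{1,\dots,5\}\). For a vertex \(v\) and a triple \(T\subseteq V\setminus\{v\}\), list \(T\) increasingly and set \(\varepsilon_v(T)\) to be the orientation of that increasing list. Let \(\pi(v,T)\in\{0,1,2,3\}\) be the position of \(v\) in the increasing list of \(T\cup\{v\}\). Define \(\delta(v,T)=\varepsilon_v(T)+\pi(v,T)\bmod 2\). Applying Lemma~\ref{planar} with the increasing ordering \(z_0<z_1<z_2<z_3\) of a four-subset \(Q\), the alternating condition says exactly that \(Q\) is planar if and only if \(\delta(v,Q\setminus\{v\})\) takes the same value \(x_Q\) for all four \(v\in Q\). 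Assume for contradiction that all five four-subsets are planar. Write \(y_u=x_{V\setminus\{u\}}\) for \(u\in V\). Then every \(\delta(v,T)\) is determined by the five bits \(y_u\): \(\delta(v,V\setminus\{v,u\})=y_u\).

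The key local input is a description of which orientation vectors a cyclic order on four elements can have. For a cyclic order on \(\{a<b<c<d\}\), record the vector \((\varepsilon(abc),\varepsilon(abd),\varepsilon(acd),\varepsilon(bcd))\). I will check directly that the six cyclic orders give exactly
\[
0000,\ 0011,\ 1001,\ 1100,\ 0110,\ 1111 .
\]
These are all eight even-weight vectors except \(1010\) and \(0101\). Parity is preserved because swapping two adjacent entries flips exactly the two triples containing both. So the constraint at each vertex is: the vector has even weight and is not one of the two forbidden alternating vectors.

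Now transfer this to \(\delta\). Fix \(v\) of rank \(r_v\in\{0,\dots,4\}\). The triple \(T=V\setminus\{v,u\}\) has \(\pi(v,T)=r_v\) if \(u>v\) and \(\pi(v,T)=r_v-1\) if \(u<v\). Hence at \(v\), the vector \((y_u)_{u\ne v}\), ordered suitably, equals the orientation vector of \(p_v\) plus an explicit shift vector depending only on \(r_v\).

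The parity part alone gives the congruence \(\sum_{u\neq v}y_u\equiv r_v \pmod 2\) for every \(v\). I checked that these five congruences are mutually consistent: they force \(y_v\equiv S+r_v\), where \(S=\sum_u y_u\). So parity alone gives no contradiction, and the excluded vectors \(1010\) and \(0101\) must be used. The remaining step is a finite check. The parity condition leaves only two candidate vectors \(y\), one for each value of \(S\) (up to consistency). For each, and for each vertex \(v\), I will compute the implied orientation vector at \(v\), namely \((y_u)_{u\ne v}\) minus the shift. The goal is to exhibit, in both cases, a vertex whose implied vector is \(1010\) or \(0101\). I expect the middle vertex \(v=3\) to be the natural one to try first, since its shift pattern is the least uniform.

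This last check is the main obstacle. It is only a handful of computations, but the bookkeeping of which triple gets which position shift must be done carefully, and it is the one place where the argument uses more than parity. If neither candidate \(y\) yields a forbidden vector at any vertex, I would fall back to a direct case analysis. In that version, I would normalize \(p_5\) by relabelling and reflection, propagate the forced orientations through the four four-subsets containing \(5\), and derive a contradiction on \(\{1,2,3,4\}\).
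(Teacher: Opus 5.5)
Your proposal is correct, and the final check you flag as the main obstacle succeeds at once, so the fallback is unnecessary. With \(r_v=v-1\), the parity step gives \(y_v\equiv S+v-1\). At \(v=1\) every triple has \(\pi=0\), so the orientation vector of \(p_1\) on \(\{2,3,4,5\}\) would be \((y_5,y_4,y_3,y_2)\equiv(S,S+1,S,S+1)\). That is \(0101\) or \(1010\), which no cyclic order on four elements realizes. The same happens at every vertex; at your preferred \(v=3\) the vector is \((y_5,y_4,y_2+1,y_1+1)\equiv(S,S+1,S,S+1)\) as well.

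Your route differs from the paper's. The paper relabels to fix \(p_1=(2,3,4,5)\). Planarity of the four four-subsets through \(1\) then forces each of them to have pattern \((0,1,0,1)\). This determines the rows \(p_2,\dots,p_5\) outright, and reading them off shows that \(\{2,3,4,5\}\) has pattern \((1,1,1,1)\).

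You never normalize or reconstruct a rotation. Planarity becomes constancy of the shifted bit \(\delta\), and you use only two local facts about a cyclic order on four elements: its orientation vector has even weight, and it is never alternating. The first fact forces the five planarity bits \(y_1,\dots,y_5\) to alternate; the second then yields the contradiction.

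The paper's argument is shorter and very concrete. Yours is symmetric in the five elements, avoids the without-loss-of-generality step, and isolates the obstruction as a clash between a parity constraint and the excluded alternating patterns, visible at every vertex simultaneously.
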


\begin{proof}
Identify the ground set with \([5]\) and suppose, for contradiction,
that all five four-subsets of \(R\) are planar.  Relabelling
\([5]\setminus\{1\}\) along the cyclic order \(p_1\), we may assume
\(p_1=(2,3,4,5)\).  Then \(\varepsilon_1(a,b,c)=0\) for every triple
\(a<b<c\) in \(\{2,3,4,5\}\), so by Lemma~\ref{planar} each of the
four four-subsets containing \(1\), listed in increasing order, has
orientation vector \((0,1,0,1)\).

Write the remaining rows normalized as \(p_v=(1,x,y,z)\).  For
distinct \(a,b\in\{x,y,z\}\) we have \(\varepsilon_v(1,a,b)=0\)
exactly when \(a\) precedes \(b\) in \((x,y,z)\).  Reading the four
vectors \((0,1,0,1)\) at the elements \(2,3,4,5\) thus yields three
precedences at each, which chain into a total order and determine the
row:
\[
\begin{array}{c|c|c}
v&\text{precedences in }p_v&p_v\\ \hline
2&5\prec 4\prec 3&(1,5,4,3)\\
3&2\prec 5\prec 4&(1,2,5,4)\\
4&3\prec 2\prec 5&(1,3,2,5)\\
5&4\prec 3\prec 2&(1,4,3,2)
\end{array}
\]
Restricting these four rows to \(\{2,3,4,5\}\) gives
\(\varepsilon_2(3,4,5)=\varepsilon_3(2,4,5)=\varepsilon_4(2,3,5)
=\varepsilon_5(2,3,4)=1\), so the four-subset \(\{2,3,4,5\}\) has
orientation vector \((1,1,1,1)\).  This is not alternating, so
\(\{2,3,4,5\}\) is non-planar, a contradiction.
\end{proof}

The next criterion reduces realizability to a local condition.  It is
due to Kyn\v{c}l~\cite{Kyncl2020}; see also the classification of
small good drawings in~\cite{AbregoEtAl2015}.

\begin{lemma}\label{realizable}
Let \(R\) be a rotation system on at least five elements.  Then \(R\)
is realizable by a simple drawing if and only if \(R[U]\) is realizable
by a simple drawing for every \(U\in\binom{V(R)}5\).
\end{lemma}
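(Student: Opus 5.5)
This statement is Kynčl's theorem, and the paper attributes it to him; here I sketch how one would prove it. The forward direction is immediate. If \(D\) is a simple drawing of \(K_n\) with \(R(D)=R\), then for every \(U\in\binom{V(R)}5\) the subdrawing of \(D\) on \(U\) is a simple drawing of \(K_5\) inducing \(R[U]\). The substance is the converse: from a rotation system that is locally realizable on every five-subset, build a global simple drawing.

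\textbf{Step 1: crossing data from four-subsets.} Each four-subset \(Q\) must have a well-defined set of crossing pairs. By Lemma~\ref{planar} and the \(K_4\) analysis, a realizable \(R[Q]\) determines whether some pair of independent edges of \(Q\) crosses, and which pair. Define \(\chi(ab,cd)=1\) exactly when \(R[\{a,b,c,d\}]\) forces \(ab\) and \(cd\) to cross. Realizability of every five-subset implies that every four-subset is realizable, since each lies in some five-subset when \(n\ge5\). So \(\chi\) is defined everywhere.

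\textbf{Step 2: the order of crossings along each edge.} For an edge \(ab\) and two edges \(cd,ef\) both crossing \(ab\), decide which crossing comes first along \(ab\) from \(a\). When the five or six endpoints involved include a shared vertex, this order is determined by the realizable drawing of a five-subset. Kynčl's key point is that these local orders are consistent. One would show that the relations "crossing with \(cd\) precedes crossing with \(ef\) along \(ab\)" form a linear order, i.e.\ they are transitive. This should reduce to checking configurations on boundedly many vertices, controlled by the five-subset drawings.

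\textbf{Step 3: build the drawing.} With rotations, the crossing pairs, and the orders along edges fixed, form the planarization: a graph on the vertices plus crossing points, with a prescribed rotation at each original vertex and the transverse rotation at each crossing. The drawing exists on the sphere exactly when this rotation system on the planarization has genus zero. Verify this by an Euler-characteristic count, or by showing every face boundary closes up consistently, again using only local data. One could alternatively proceed by induction on \(n\): delete a vertex, draw the rest, and reinsert its star, using the five-subset conditions to route each new edge through the faces.

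\textbf{Main obstacle.} The hard step is the consistency and global gluing in Steps 2 and 3. Six-vertex configurations, such as two edges crossing a third with all six endpoints distinct, are not directly covered by five-subset hypotheses. Kynčl handles this with a careful structural argument about the sequences of crossings. In this paper I would simply invoke \cite{Kyncl2020} for that step rather than reprove it.
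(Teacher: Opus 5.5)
Your proposal matches the paper, which gives no proof of its own: it attributes the converse to Kyn\v{c}l~\cite{Kyncl2020}, and the forward direction is the trivial restriction of a drawing to five vertices, exactly as you say. Your Steps 1--3 carry no weight once you invoke that theorem. One caution if you ever expand them: for an edge crossed by two edges with six distinct endpoints, the order of the two crossings is not determined by the rotation system at all. For example, the three long diagonals of a convex hexagon admit both orders. So in a full proof that order has to be chosen, not derived.
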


In our implementation, each five-element subsystem is canonicalized
and compared with the catalogue of the five known realizable equivalence
classes on five elements, see, for example~\cite{AbregoEtAl2015}.

We record next a simple averaging identity.  Every non-planar
four-subset \(Q\) of \(R\) remains present in \(R\setminus\{v\}\)
precisely when \(v\notin Q\), so each is counted for exactly \(n-4\)
choices of \(v\), and double counting gives
\begin{equation}\label{averaging}
(n-4)\np(R)=\sum_{v\in V(R)}\np(R\setminus\{v\}).
\end{equation}
In particular some \(v\) satisfies
\(\np(R\setminus\{v\})\leq\frac{n-4}{n}\np(R)\).

We now describe how Theorem~\ref{thm:verification} is verified, taking
order nine as an example.  To enumerate representatives of the
systems of order nine with at most \(36\) non-planar four-subsets, we
enumerate representatives of the systems of order eight with at most
\(20\), extend each of them by one element in every possible way, and
keep or discard each extension according to its number of
non-planar four-subsets, computed by Lemma~\ref{planar}.  To enumerate the
representatives of order eight with at most \(20\), we extend those of
order seven with at most \(10\); to enumerate those, we extend the
representatives of order six with at most \(5\); and the
representatives of order six we enumerate directly.  The successive
cutoffs come from \eqref{averaging}: deleting a suitable element from
a system of order nine with at most \(36\) non-planar four-subsets
leaves at most \(\lfloor\frac59\cdot 36\rfloor=20\) of them, and
likewise \(\lfloor\frac48\cdot 20\rfloor=10\) and
\(\lfloor\frac37\cdot 10\rfloor=4\), the last being covered by the
slightly larger cutoff \(5\).  The recursion is exhaustive: a relabelling,
together with a global reversal if needed, that carries
\(R\setminus\{v\}\) to its canonical representative extends to \(R\)
by sending \(v\) to the new symbol, so \(R\) is equivalent to one of
the generated extensions.

The program {\tt extendRS.c}, included with the arXiv submission as an
ancillary file, implements this.  It reads a list of rotation systems
of a given order, generates every one-element extension, and outputs
one canonical representative of each equivalence class within a
prescribed cutoff.  It is written in the most straightforward manner,
at some cost in performance, but still suffices to verify
Conjecture~\ref{archdeacon} up to \(n=8\).  Extending a list is
trivially parallel, since separate threads may handle separate
systems; for the step from order eight to order nine we ran a lightly
parallelized version on \(66\) CPU cores, which took \(165\) CPU
hours.  It found exactly \(421\) equivalence classes of rotation
systems of order nine with at most \(36\) non-planar four-subsets.
All of them have exactly \(36\), and all pass the realizability test
of Lemma~\ref{realizable}; the count 421 agrees with Aichholzer's
independent enumeration of crossing-minimal
drawings~\cite{Aichholzer2021}.

Finally, \eqref{averaging} and Lemma~\ref{realizable} carry the
verification from an odd order to the next even one.  Let \(n\geq 5\)
be odd, suppose Theorem~\ref{thm:verification} holds at order \(n\),
and let \(R\) have order \(n+1\).  Since
\(H(n+1)=\frac{n+1}{n-3}H(n)\) for odd \(n\),
\[
(n-3)\np(R)=\sum_{v\in V(R)}\np(R\setminus\{v\})
\geq (n+1)H(n)=(n-3)H(n+1).
\]
If equality holds, then \(\np(R\setminus\{v\})=H(n)\) for every
\(v\), so every \(R\setminus\{v\}\) is realizable; as each
five-element subsystem of \(R\) avoids some \(v\),
Lemma~\ref{realizable} makes \(R\) realizable.  Taking \(n=9\) settles
order ten and completes the proof of Theorem~\ref{thm:verification}.

\section{Asymptotic lower bounds}\label{sec:asymptotic}

\subsection{Proof of Theorem~\ref{thm:asymptotic}}\label{subsec:flag}

Let \(R=(p_v)_{v\in V}\) be a rotation system on \(n\) elements.  For
a rotation system \(F\), let \(\p(F)\) be its number of planar
four-subsets, so that
\begin{equation}\label{eq:np-plus-p}
\np(F)+\p(F)=\binom{|V(F)|}{4}.
\end{equation}
In particular \(0\leq \p(F)\leq5\) when \(F\) has five elements.  The following lemma, whose proof is an obvious double-counting, expresses \(\p(R)\) in terms of these local counts.

\begin{lemma}\label{lem:k5-average}
For every rotation system \(R\) on \(n\geq 5\) elements,
\begin{equation}\label{eq:k5-planar-average}
\frac{1}{\binom n5}\sum_{S\in\binom V5}\p(R[S])
=\frac{5\,\p(R)}{\binom n4}.
\end{equation}
\end{lemma}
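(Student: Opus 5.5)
The plan is to double count the pairs \((Q,S)\) with \(Q\) a planar four-subset of \(R\), \(S\in\binom V5\), and \(Q\subseteq S\). The first step is to note that planarity of \(Q\) is a property of the induced subsystem \(R[Q]\) only. Moreover, for \(Q\subseteq S\) we have \((R[S])[Q]=R[Q]\), since restricting each \(p_v\) first to \(S\setminus\{v\}\) and then to \(Q\setminus\{v\}\) is the same as restricting directly to \(Q\setminus\{v\}\). Hence \(Q\) is planar in \(R[S]\) if and only if it is planar in \(R\).

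Next, I count the pairs in two ways. Summing over \(S\), the number of such pairs is \(\sum_{S\in\binom V5}\p(R[S])\). Summing over \(Q\), each planar four-subset lies in exactly \(n-4\) five-subsets \(S\), namely \(S=Q\cup\{x\}\) with \(x\notin Q\). This gives
\[
\sum_{S\in\binom V5}\p(R[S])=(n-4)\,\p(R),
\]
which is the analogue of \eqref{averaging} for planar subsets.

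Finally, I divide by \(\binom n5\) and use the identity \(\binom n5=\frac{n-4}{5}\binom n4\):
\[
\frac{(n-4)\,\p(R)}{\binom n5}=\frac{5\,\p(R)}{\binom n4},
\]
which is \eqref{eq:k5-planar-average}.

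There is no real obstacle here. The only point needing care is the first step, that planarity is inherited by induced subsystems, and this follows directly from the definition of \(R[U]\).
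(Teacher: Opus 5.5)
Your proof is correct and is exactly the double counting the paper has in mind (it states the proof only as ``an obvious double-counting''). Counting pairs \((Q,S)\) with \(Q\) a planar four-subset contained in \(S\in\binom V5\) gives \(\sum_S\p(R[S])=(n-4)\p(R)\), and \(\binom n5=\frac{n-4}{5}\binom n4\) finishes it; your check that \((R[S])[Q]=R[Q]\) makes explicit the one point the paper leaves implicit.
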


So we must bound the average of \(\p(R[S])\) from above.  Let
\(\mathcal R_5\) be the set of all \(6^5=7776\) labelled rotation
systems on \([5]\), and for \(F\in\mathcal R_5\) let \(x_F\) be the
number of five-element subsets \(U\subseteq V\) such that \(R[U]\),
with the elements of \(U\) relabelled by \([5]\) in increasing order,
equals \(F\), divided by \(\binom n5\).  Then \(x_F\geq0\) and \(\sum_Fx_F=1\).  In
other words, \(x=(x_F)\) is the distribution of the five-element
subsystems of \(R\).  Since \(\p\) does not change under relabelling,
\begin{equation}\label{eq:target-linear}
\frac{1}{\binom n5}\sum_{S\in\binom V5}\p(R[S])
=\sum_{F\in\mathcal R_5}\p(F)\,x_F .
\end{equation}

We now exploit the constraint on the vectors \(x\) that really come
from a rotation system.  Given a rotation system \(F\) on four
ordered elements \((a,b,c,d)\), we define its \emph{type vector} as
\[
\tau(F)=\bigl(\varepsilon_a(b,c,d),\varepsilon_b(a,c,d),
\varepsilon_c(a,b,d),\varepsilon_d(a,b,c)\bigr)\in\{0,1\}^4 .
\]
Listing all vectors in \(\{0,1\}^4\) as \(\tau_1,\ldots,\tau_{16}\) in
lexicographic order, Lemma~\ref{planar} says that \(F\) is planar if
and only if its type vector is \(\tau_6\) or \(\tau_{11}\).

\begin{definition}\label{def:five-vertex-flag-matrix}
Let \(F\) be a rotation system on a five-element set.  For
\(1\leq i,j\leq16\), let \(C_F(i,j)\) be the number of orderings
\((v_0,v_1,v_2,x,y)\) of its five elements such that the system
induced on the ordered set \((v_0,v_1,v_2,x)\) has type vector
\(\tau_i\) and the system induced on the ordered set
\((v_0,v_1,v_2,y)\) has type vector \(\tau_j\).  Write
\(M_F=\frac{1}{5!}C_F=\frac{1}{120}C_F\).
\end{definition}

The \((i,j)\) entry of \(\sum_Fx_FM_F\) is the probability that, for a
random ordered triple \(\theta\) of distinct elements of \(V\) and a
random ordered pair \((x,y)\) of distinct further elements, the
subsystems induced on \(\theta\cup\{x\}\) and \(\theta\cup\{y\}\) have
type vectors \(\tau_i\) and \(\tau_j\).  This matrix is almost
positive semidefinite.  Indeed, let \(z_\theta\) be the distribution
of the type vector of one random extension element.  Had \(x\) and
\(y\) been drawn independently, the matrix would be
\(z_\theta z_\theta^{\mathsf T}\), which is positive semidefinite, and
averaging over \(\theta\) keeps it so.  Asking for \(x\neq y\) changes
it by \(O(1/n)\), as the proof of Theorem~\ref{thm:asymptotic} below
shows.  Heuristically, in the limit we have
\begin{equation}\label{eq:psd-constraint}
\sum_{F\in\mathcal R_5}x_FM_F\succeq0 .
\end{equation}
So the quantity we want is, asymptotically, at most the value of the
semidefinite program
\begin{equation}\label{eq:sdp-primal}
\begin{aligned}
\text{maximize}\quad & \sum_{F\in\mathcal R_5}\p(F)\,x_F\\
\text{subject to}\quad & x_F\geq0,\qquad
\sum_{F\in\mathcal R_5}x_F=1,\qquad
\sum_{F\in\mathcal R_5}x_FM_F\succeq0 .
\end{aligned}
\end{equation}

The duality principle suggests us to look for a symmetric \(16\times16\) matrix
\(Q\succeq0\) and a constant \(A\) with
\begin{equation}\label{eq:dual-feasibility}
\p(F)+\langle Q,M_F\rangle\leq A
\qquad\text{for every }F\in\mathcal R_5 ,
\end{equation}
where \(\langle X,Y\rangle=\operatorname{tr}(X^{\mathsf T}Y)\). Such
a pair bounds \eqref{eq:sdp-primal} by \(A\).  Indeed, the inner product of two positive
semidefinite matrices is nonnegative, so for every feasible \(x\) we
have \(\sum_Fx_F\langle Q,M_F\rangle
=\langle Q,\sum_Fx_FM_F\rangle\geq0\), and
averaging \eqref{eq:dual-feasibility} with the weights \(x_F\), which
are nonnegative and sum to \(1\), gives
\[
\sum_F\p(F)x_F
\leq\sum_Fx_F\bigl(\p(F)+\langle Q,M_F\rangle\bigr)\leq A .
\]
We check \eqref{eq:dual-feasibility} by one inequality for each of
the \(7776\) systems in \(\mathcal R_5\). We found such a pair of \(Q\) and \(A\) numerically, by minimizing \(A\) subject to \eqref{eq:dual-feasibility} with the SDP solver Clarabel~\cite{GoulartChen2026}.
The numerical solution suggested \(A = 10/3\) and a rank-one matrix \(Q\); rescaling and rounding led to the following exact values
\[
s=(-1,0,0,1,0,1,-1,0,0,1,-1,0,-1,0,0,1)^{\mathsf T}\in\mathbb Z^{16},
\qquad
Q=\frac{10}{3}ss^{\mathsf T},
\qquad
A=\frac{10}{3} .
\]
This \(Q\) is positive semidefinite, being a positive multiple of
\(ss^{\mathsf T}\).  For this pair, \eqref{eq:dual-feasibility} becomes the
following statement, the only computer-assisted ingredient of the
proof.

\begin{lemma}
\label{lem:finite-k5-certificate}
For every rotation system \(F\) on five elements,
\begin{equation}\label{eq:finite-k5-certificate}
\p(F)+\langle Q,M_F\rangle\leq\frac{10}{3} .
\end{equation}
\end{lemma}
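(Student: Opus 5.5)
The plan is a finite exhaustive check over \(\mathcal R_5\), organised so that it stays small and its output is easy to audit. Because \(Q=\frac{10}{3}ss^{\mathsf T}\) has rank one, we can write
\[
\langle Q,M_F\rangle=\frac{10}{3}\,s^{\mathsf T}M_Fs=\frac{10}{3}\cdot\frac{1}{120}\sum_{(v_0,v_1,v_2,x,y)}s_{i(x)}s_{j(y)} .
\]
Here the sum runs over all \(120\) orderings of the ground set of \(F\), and \(i(x)\), \(j(y)\) are the indices of the type vectors of \((v_0,v_1,v_2,x)\) and \((v_0,v_1,v_2,y)\). We group the orderings by the ordered triple \(\theta=(v_0,v_1,v_2)\). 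The two remaining elements then give two values \(a_\theta=s_{\tau(\theta,x)}\) and \(b_\theta=s_{\tau(\theta,y)}\), and the ordered pair \((x,y)\) is taken in both orders. So
\[
\langle Q,M_F\rangle=\frac{10}{3}\cdot\frac{1}{120}\sum_{\theta}2a_\theta b_\theta=\frac{1}{18}\sum_\theta a_\theta b_\theta ,
\]
a sum over the \(60\) ordered triples.

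The inequality to prove becomes \(\p(F)+\frac1{18}\sum_\theta a_\theta b_\theta\le\frac{10}{3}\), that is,
\[
18\,\p(F)+\sum_\theta a_\theta b_\theta\le 60 .
\]
This is an integer inequality, so the verification is exact. For each of the \(7776\) systems we compute \(\p(F)\) with Lemma~\ref{planar}, compute the \(60\) products \(a_\theta b_\theta\in\{-1,0,1\}\) from the orientation bits, and check the bound. The quantity is invariant under relabelling, so it suffices to check one representative per isomorphism class. Invariance under global reversal is not automatic for this particular \(s\), so I would check all labelled systems, or at least both a class and its reversal; this is cheap.

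To make the certificate partly human-readable, I would also record its structure. Note that \(s_6=s_{11}=1\) (planar types have positive weight) and \(s\) is supported on eight type vectors. The case analysis would go by \(\p(F)\in\{0,\dots,5\}\). When \(\p(F)\le 2\) we have \(18\p(F)\le 36\), so it suffices to have \(\sum_\theta a_\theta b_\theta\le 24\). This should follow from the structure of \(s\): each triple \(\theta\) contributes \(+1\) only when both extensions land in the same sign class, and, for instance, the triples with \(a_\theta b_\theta=1\) can be counted per element. When \(\p(F)\ge 3\), Lemma~\ref{lem:five-hand} gives \(\p(F)\le4\), so the only cases are \(\p(F)=3\) and \(\p(F)=4\). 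These need \(\sum_\theta a_\theta b_\theta\le 6\) and \(\le -12\) respectively, which forces many opposite-sign pairs.

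The main obstacle is that this last step has no evident conceptual argument: the tight cases (presumably \(\p(F)=4\), realizable drawings with one crossing, and equality \(A=10/3\)) depend on the precise sign pattern of \(s\). I would therefore rely on the exhaustive enumeration itself as the proof, with the integer reformulation above guaranteeing exactness. As a cross-check, I would confirm that equality is attained and that all systems with \(\p(F)=4\) satisfy the bound.
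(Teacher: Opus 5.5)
Your argument is the paper's: your inequality $18\p(F)+\sum_\theta a_\theta b_\theta\le 60$ is exactly half of the paper's integer form $36\p(F)+s^{\mathsf T}C_Fs\le 120$, and both are settled by the same exhaustive check over $\mathcal R_5$. Your auxiliary remarks should be dropped or corrected, though: $s_{11}=-1$, not $1$; invariance under global reversal is in fact automatic, because reversal complements every type vector and $s_{17-i}=-s_i$; and your proposed sufficient bound $\sum_\theta a_\theta b_\theta\le 24$ for $\p(F)\le 2$ is false, since the system with every $p_i$ in increasing order has $\p(F)=0$ and all $60$ products equal to $1$, so it attains equality with $\p(F)=0$ rather than $\p(F)=4$.
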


\begin{proof}
Since \(\langle Q,M_F\rangle
=\frac{10}{3}\cdot\frac{s^{\mathsf T}C_Fs}{120}
=\frac{s^{\mathsf T}C_Fs}{36}\), multiplying
\eqref{eq:finite-k5-certificate} by \(36\) turns it into the
inequality \(36\p(F)+s^{\mathsf T}C_Fs\leq120\) between integers, which
we check for every \(F\in\mathcal R_5\).  The left-hand side has maximum
\(120\) over \(\mathcal R_5\), so there is no violation, and as every quantity is an integer
the check is exact. This check is performed by the program {\tt veriFIVE.c}, included
with the arXiv submission as an ancillary file.
\end{proof}

Both the semidefinite program and the exhaustive check in Lemma~\ref{lem:finite-k5-certificate} are easily handled on a personal computer.

\begin{proof}[Proof of Theorem~\ref{thm:asymptotic}]
Let \(\theta=(v_0,v_1,v_2)\) be an ordered triple of distinct elements
of \(V\) and put \(N=n-3\).  For \(1\leq i\leq16\) let
\(c_i(\theta)\) be the number of \(x\notin\theta\) for which the
system induced on \((v_0,v_1,v_2,x)\) has type vector \(\tau_i\), so
\(\sum_ic_i(\theta)=N\).  Pick an ordered pair \((x,y)\) of distinct
elements outside \(\theta\) at random and let \(D_\theta(i,j)\) be the
probability that they give type vectors \(\tau_i\) and \(\tau_j\).
Counting ordered pairs,
\[
D_\theta(i,j)=
\frac{c_i(\theta)c_j(\theta)-\delta_{ij}c_i(\theta)}{N(N-1)},
\]
and hence
\begin{align}
\langle Q,D_\theta\rangle
&=\frac{10}{3N(N-1)}
\left[\left(\sum_{i=1}^{16}s_ic_i(\theta)\right)^2
-\sum_{i=1}^{16}s_i^2c_i(\theta)\right]\notag\\
&\geq-\frac{10}{3N(N-1)}\sum_{i=1}^{16}c_i(\theta)
=-\frac{10}{3(n-4)},
\label{eq:distinct-extension-bound}
\end{align}
where we dropped the square and used \(s_i^2\leq1\) and
\(N-1=n-4\).

Averaging \(D_\theta\) over all ordered triples of distinct elements
is the same as picking an ordered five-tuple of distinct elements at
random, which one may also do by picking a five-element subset \(S\)
and then one of the \(120\) orderings of \(S\).  Hence
\begin{equation}\label{eq:flag-matrix-average}
\frac{1}{n(n-1)(n-2)}\sum_{\theta}D_\theta
=\frac{1}{\binom n5}\sum_{S\in\binom V5}M_{R[S]}
=\sum_{F\in\mathcal R_5}x_FM_F ,
\end{equation}
and taking the inner product with \(Q\) in
\eqref{eq:distinct-extension-bound} termwise gives
\begin{equation}\label{eq:average-flag-lower-bound}
\Bigl\langle Q,\sum_{F\in\mathcal R_5}x_FM_F\Bigr\rangle
\geq-\frac{10}{3(n-4)} .
\end{equation}
Averaging Lemma~\ref{lem:finite-k5-certificate} with the weights
\(x_F\) and using \eqref{eq:average-flag-lower-bound},
\[
\sum_{F\in\mathcal R_5}\p(F)x_F
=\sum_{F\in\mathcal R_5}x_F
\left(\p(F)+\langle Q,M_F\rangle\right)
-\Bigl\langle Q,\sum_{F\in\mathcal R_5}x_FM_F\Bigr\rangle
\leq\frac{10}3+\frac{10}{3(n-4)} .
\]
With \eqref{eq:target-linear} and Lemma~\ref{lem:k5-average}, this
reads \(5\,\p(R)/\binom n4\leq\frac{10}3+\frac{10}{3(n-4)}\), so by
\eqref{eq:np-plus-p}
\[
\np(R)=\binom n4-\p(R)
\geq\left(\frac13-\frac{2}{3(n-4)}\right)\binom n4 .
\]
Finally, \eqref{eq:hill-number} gives
\(H(n)=\bigl(\frac38+O(1/n)\bigr)\binom n4\), so
\(\np(R)\geq\bigl(\frac89-O(1/n)\bigr)H(n)\), as wanted.
\end{proof}

\subsection{Proof of Theorem~\ref{thm:asymptotic_hand}}\label{subsec:hand}

This proof uses no computation beyond Lemma~\ref{lem:five-hand}.  With
Lemma~\ref{lem:k5-average}, that lemma alone already gives
\(\np(R)\geq\frac15\binom n4\).  A second moment estimate raises the
density \(1/5\) to \(1/4\).

\begin{proof}[Proof of Theorem~\ref{thm:asymptotic_hand}]
Let \(R\) be a rotation system on \(V\) with \(|V|=n\geq5\), and for
\(S\subseteq V\) abbreviate \(\np(R[S])\) and \(\p(R[S])\) to
\(\np(S)\) and \(\p(S)\).  Let \(B\) be the set of non-planar
four-subsets of \(R\), so that \(\np(R)=|B|\), and put
\(\beta=|B|/\binom n4\in[0,1]\).  Let \(\mathbb E\) denote expectation
over a uniformly random \(S\in\binom V5\).  Lemma~\ref{lem:five-hand}
gives
\begin{equation}\label{eq:np-range}
1\leq\np(S)\leq5
\qquad\text{for every }S\in\binom V5 ,
\end{equation}
while Lemma~\ref{lem:k5-average} gives
\(\mathbb E\bigl(\p(S)\bigr)=5\,\p(R)/\binom n4=5(1-\beta)\), so by
\eqref{eq:np-plus-p},
\begin{equation}\label{eq:first-moment}
\mathbb E\bigl(\np(S)\bigr)=5\beta .
\end{equation}

We next bound the second moment from below.  Two distinct four-subsets
lie in a common five-element set exactly when they meet in three
elements, and then their union is the only such set.  Writing
\(d(T)=|\{Q\in B: T\subseteq Q\}|\) for \(T\in\binom V3\), and
grouping the pairs \(\{Q,Q'\}\subseteq B\) with \(|Q\cap Q'|=3\) by
their intersection \(T=Q\cap Q'\),
\[
\sum_{S\in\binom V5}\binom{\np(S)}2
=\bigl|\{\{Q,Q'\}\subseteq B: |Q\cap Q'|=3\}\bigr|
=\sum_{T\in\binom V3}\binom{d(T)}2 .
\]
Counting the pairs \((T,Q)\) with \(T\subseteq Q\in B\) in two ways
gives \(\sum_Td(T)=4|B|\), so the average of \(d(T)\) is
\(4|B|/\binom n3=\beta(n-3)\).  By Jensen's inequality,
\[
\sum_{T\in\binom V3}\binom{d(T)}2
\geq\binom n3\binom{\beta(n-3)}2.
\]
Dividing by \(\binom n5\), we can compute
\begin{equation}\label{eq:second-moment}
\mathbb E\left(\binom{\np(S)}2\right)
\geq\frac{10\beta^2(n-3)-10\beta}{n-4}
\geq10\beta^2-\frac{10}{n-4},
\end{equation}
the last step using \(n-3\geq n-4\) and \(\beta\leq1\).

On \([1,5]\), convexity puts \(\binom x2\) below the chord through
\((1,0)\) and \((5,10)\), that is, \(x\geq1+\frac25\binom x2\).  By
\eqref{eq:np-range} we may put \(x=\np(S)\), take expectations and use
\eqref{eq:second-moment}:
\[
\mathbb E\bigl(\np(S)\bigr)
\geq1+\frac25\mathbb E\left(\binom{\np(S)}2\right)
\geq1+4\beta^2-\frac4{n-4} .
\]
Comparing with \eqref{eq:first-moment} gives
\(5\beta\geq1+4\beta^2-\frac4{n-4}\), that is,
\begin{equation}\label{eq:beta-quadratic}
(1-4\beta)(1-\beta)=4\beta^2-5\beta+1\leq\frac4{n-4} .
\end{equation}
If \(\beta<1/4\) then \(1-\beta>3/4\), so \eqref{eq:beta-quadratic}
gives \(1-4\beta\leq\frac{16}{3(n-4)}\), and otherwise
\(\beta\geq1/4\); in either case
\(\beta\geq\frac14-\frac4{3(n-4)}\), so
\(\np(R)\geq(\frac14-o(1))\binom n4\). Therefore, we can use \eqref{eq:hill-number} to conclude that
\[
\np(R)\geq\left(\frac14-o(1)\right)\cdot\frac83H(n)
=\left(\frac23-o(1)\right)H(n). \qedhere
\]
\end{proof}

\section{Antipodal pairs}\label{sec:antipodal}

\begin{proof}[Proof of Theorem~\ref{thm:antipodal}]
Let \(\{u,v\}\) be the given antipodal pair, and
\(U\) be the set of elements not equal to \(u\) or \(v\). Write \(m=|U|=n-2\). We identify \(U\)
with \([m]\) so that
\[
p_u=(v,1,\ldots,m)
\quad\text{and}\quad
p_v=(u,m,\ldots,1).
\]
Introduce an auxiliary symbol \(\star\), and let \(c_u^v\) be obtained
from \(p_u\) by replacing \(v\) with \(\star\), and \(c_v^u\) from
\(p_v\) by replacing \(u\) with \(\star\). By our hypothesis, the two cyclic orders \(c_u^v\) and \(c_v^u\) are reverses of each other.

We call a triple \(T\in\binom{U\cup\{\star\}}3\) \emph{good} if
\(T\cup\{u\}\) and \(T\cup\{v\}\) are planar in case
\(\star\notin T\), and if \(\{u,v\}\cup(T\setminus\{\star\})\) is
planar in case \(\star\in T\). For distinct \(a,b\in U\), deleting \(a\) and \(b\) from \(c_u^v\) gives us two intervals of elements; let \(I_{a,b}\) be the one containing \(\star\) and
\(J_{a,b}\) the other. Note that since \(c_v^u\) is the reverse of \(c_u^v\), deleting \(a\) and \(b\) from \(c_v^u\) gives us the same two intervals. An element \(x \in (U\cup\{\star\}) \setminus \{a,b\}\) is said to be a \emph{good complement} of \(\{a,b\}\) if the triple \(\{a,b,x\}\) is good.

\begin{lemma}\label{lem:good-interval}
For distinct \(a,b\in U\), at most one of the intervals \(I_{a,b}\) and \(J_{a,b}\) contains a good complement of \(\{a,b\}\).
\end{lemma}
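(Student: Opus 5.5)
The approach is to turn goodness of each candidate complement into one orientation identity via Lemma~\ref{planar}. We will show that every good complement \(x\) of \(\{a,b\}\) satisfies \(\varepsilon_u(a,b,x)=\sigma\), where \(\sigma\) depends only on \(a,b\) and not on \(x\). Which interval contains \(x\) is read off from \(\varepsilon_u(a,b,x)\). Write \(c=c_u^v\). For \(x\in U\setminus\{a,b\}\) we have \(\varepsilon_u(a,b,x)=\varepsilon_c(a,b,x)\), and \(x\in I_{a,b}\) if and only if \(\varepsilon_c(a,b,x)=\varepsilon_c(a,b,\star)\). So a common value of \(\varepsilon_u(a,b,x)\) puts all good complements in \(U\) in the same interval. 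Since \(\star\in I_{a,b}\) always, it remains to check that when \(\star\) is good this common side is \(I_{a,b}\).

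\emph{Step 1 (complements \(x\in U\)).} Apply Lemma~\ref{planar} to \(\{a,b,x,u\}\) with the ordering \((u,a,b,x)\), and to \(\{a,b,x,v\}\) with the ordering \((v,a,b,x)\). Alternation of the first two coordinates gives
\[
\varepsilon_a(u,b,x)=1-\varepsilon_u(a,b,x), \qquad \varepsilon_a(v,b,x)=1-\varepsilon_v(a,b,x).
\]
Because \(c_v^u\) is the reverse of \(c_u^v\), we have \(\varepsilon_v(a,b,x)=1-\varepsilon_u(a,b,x)\). Hence \(\varepsilon_a(u,b,x)\neq\varepsilon_a(v,b,x)\). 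Geometrically, in \(p_a\) the pairs \(\{u,v\}\) and \(\{b,x\}\) interleave, so \(x\) lies in the arc of \(p_a\) between \(u\) and \(v\) that avoids \(b\).

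For such \(x\), the orientation \(\varepsilon_a(u,b,x)\) equals \(\varepsilon_a(u,b,v)\), since replacing \(v\) by a point of the same arc does not change the orientation. Combining this with the first identity gives
\[
\varepsilon_u(a,b,x)=1-\varepsilon_a(u,b,v)=:\sigma,
\]
which depends only on \(a,b\). This is the core of the argument. The main care needed is in the small interleaving step: a cyclic order on four symbols \(u,v,b,x\) in which \(\varepsilon(u,b,x)\neq\varepsilon(v,b,x)\) forces \(b,x\) to separate \(u,v\).

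\emph{Step 2 (the complement \(\star\)).} Suppose \(\star\) is good, so \(\{u,v,a,b\}\) is planar. Use Lemma~\ref{planar} with the ordering \((u,v,a,b)\). The first and third coordinates of an alternating pattern agree, so
\[
\varepsilon_a(u,v,b)=\varepsilon_u(v,a,b)=\varepsilon_c(\star,a,b)=\varepsilon_c(a,b,\star),
\]
where the last equality holds because a cyclic shift preserves orientation. Since \(\varepsilon_a(u,v,b)=1-\varepsilon_a(u,b,v)=\sigma\), we get \(\sigma=\varepsilon_c(a,b,\star)\). Therefore any good \(x\in U\) has \(\varepsilon_c(a,b,x)=\varepsilon_c(a,b,\star)\), that is, \(x\in I_{a,b}\). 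Together with Step 1 this shows all good complements lie in a single interval, proving Lemma~\ref{lem:good-interval}.

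The only real risk is bookkeeping: keeping the signs of \(\varepsilon\) consistent under the reversal between \(c_u^v\) and \(c_v^u\), and under the choice of orderings in Lemma~\ref{planar}. I would double-check these on a small explicit example before writing the proof out.
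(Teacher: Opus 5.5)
Your proof is correct and is essentially the paper's argument: both apply Lemma~\ref{planar} to \((u,a,b,x)\) and \((v,a,b,x)\), use the reversal to get \(\varepsilon_v(a,b,x)=1-\varepsilon_u(a,b,x)\), and read \(p_a\) cut at \(b\) to see that \(x\) lies between \(u\) and \(v\), so the side of \(\{a,b\}\) containing \(x\) is forced by the relative order of \(u\) and \(v\) in \(p_a\); the \(\star\) case is handled in both via planarity of \(\{u,v,a,b\}\). The only difference is presentation: the paper derives a contradiction from two good complements in opposite intervals, whereas you extract the invariant \(\sigma=1-\varepsilon_a(u,b,v)\) directly and identify the common interval explicitly.
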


\begin{proof}

Suppose on the contrary that \(x \in I_{a,b}\) and \(y \in J_{a,b}\) are two good complements of \(\{a,b\}\). We first deal with the case \(x \neq \star\). By our definition of \(I_{a,b}\) and \(J_{a,b}\), we have \(\varepsilon_u(a,b,x) = 1 - \varepsilon_u(a,b,y)\). Without loss of generality, we assume \(\varepsilon_u(a,b,x) = 0\) and \(\varepsilon_u(a,b,y) = 1\). Note that we also have \(\varepsilon_v(a,b,x) = 1 -\varepsilon_u(a,b,x)\) and similarly for \(y\). Because \(\{u,x,a,b\}\) and \(\{v,x,a,b\}\) are planar as \(x\) is a good complement, by Lemma~\ref{planar} applied to \((u,a,b,x)\) and \((v,a,b,x)\), we have \(\varepsilon_a(u,b,x)=1\) and \(\varepsilon_a(v,b,x)=0\). Similarly, because \(y\) is a good complement, we have \(\varepsilon_a(u,b,y)=0\) and \(\varepsilon_a(v,b,y)=1\).

Now consider the sequence of entries we encounter as we read the cyclic order \(p_a\) starting after and ending before \(b\): \(\varepsilon_a(u,b,x)=1\) means we encounter \(u\) before \(x\), \(\varepsilon_a(v,b,x)=0\) means we encounter \(x\) before \(v\), and together they imply that we encounter \(u\) before \(v\);  \(\varepsilon_a(u,b,y)=0\) means we encounter \(y\) before \(u\), \(\varepsilon_a(v,b,y)=1\) means we encounter \(v\) before \(y\), and together they imply that we encounter \(v\) before \(u\); hence a contradiction is reached.

It remains to deal with the case \(x=\star\), in which
\(\{u,v,a,b\}\) is planar.  Since \(\star\) takes the place of \(v\) in
\(c_u^v\), our definition of \(I_{a,b}\) and \(J_{a,b}\) gives
\(\varepsilon_u(a,b,v)=1-\varepsilon_u(a,b,y)\). Without loss of generality, we assume \(\varepsilon_u(a,b,v)=0\) and \(\varepsilon_u(a,b,y)=1\). Just as the previous case, we apply Lemma~\ref{planar} to \((u,a,b,v)\), \((u,a,b,y)\), and \((v,a,b,y)\), we can get \(\varepsilon_a(u,b,v)=1\), \(\varepsilon_a(u,b,y)=0\), and \(\varepsilon_a(v,b,y)=1\). Now we read the entries in \(p_a\) with \(b\) being the cut: \(\varepsilon_a(u,b,v)=1\) means we encounter \(u\) before \(v\);
\(\varepsilon_a(v,b,y)=1\) means we encounter \(v\) before \(y\), \(\varepsilon_a(u,b,y)=0\) means we encounter \(y\) before \(u\), and together they imply that we encounter \(v\) before \(u\); hence a contradiction is reached.
\end{proof}

Now we define a tournament \(\mathcal T\), i.e. a complete directed graph, on \([m]\) by setting, for
\(a<b\), \(a\to b\) if every good complement of \(\{a,b\}\) lies in
\(I_{a,b}\), and \(b\to a\) otherwise; by
Lemma~\ref{lem:good-interval} the second case means that every good
complement lies in \(J_{a,b}\).

Since \(\star \in I_{a,b}\) by definition, each edge
\(b\to a\) with \(a<b\) means that \(\star\) is not a good complement
of \(\{a,b\}\), and hence that \(\{u,v,a,b\}\) is non-planar. Now let \(a<b<c\) and suppose \(\{a,b,c\}\) is good.  Then we have
\(c\in I_{a,b}\), \(a\in I_{b,c}\), and \(b\in J_{a,c}\);
hence \(a\to b\to c\to a\).  So a good triple in \(U\) induces a
cyclic triangle in \(\mathcal{T}\), and therefore every non-cyclic triangle
\(\{a,b,c\}\) yields a non-planar four-subset among
\(\{u,a,b,c\}\) and \(\{v,a,b,c\}\). These four-subsets considered are pairwise
distinct: those coming from edges contain both \(u\) and \(v\) with the other two elements given by the edge, those
coming from non-cyclic triangles contain exactly one of \(u\) or \(v\) with the other three elements given by the triangle. Writing \(\beta(\mathcal T)\) for the number
of pairs \(a<b\) with \(b\to a\), and \(\tau(\mathcal T)\) for the
number of non-cyclic triangles, we have
\begin{equation}\label{eq:new-np-tournament}
\np(R)-\np(R[U])\geq \beta(\mathcal T)+\tau(\mathcal T).
\end{equation}

It remains to bound the right-hand side.  Let \(d_a\) be the
outdegree of \(a\) and \(\beta_a\) the number of its outneighbours among
\(\{1,\ldots,a-1\}\).  Every non-cyclic triangle has a unique source, and
any two outneighbours of an element form one with that element as
source, so \(\tau(\mathcal T)=\sum_a\binom{d_a}2\) and
\(\beta(\mathcal T)=\sum_a \beta_a\); since \(a\) has only \(m-a\)
successors, \(\beta_a\geq\max\{0,d_a-(m-a)\}\).  Defining
\(f_a(d)=\binom d2+\max\{0,d-(m-a)\}\) we obtain
\begin{equation}\label{eq:tournament-degree-bound}
\beta(\mathcal T)+\tau(\mathcal T)\geq\sum_{a=1}^m f_a(d_a),
\qquad\text{where }\ \sum_{a=1}^m d_a=\binom m2 .
\end{equation}

To bound the right-hand side of
\eqref{eq:tournament-degree-bound} from below, define
\(\Delta_a(d)=f_a(d)-f_a(d-1)\) for \(1\leq d\leq m-1\).
Let us arrange these numbers into an \(m\)-by-\((m-1)\) matrix \(\Delta\) whose \((a,d)\)-entry is \(\Delta_a(d)\).
Since \(f_a(0)=0\), we have \(f_a(d_a)=\sum_{d=1}^{d_a}\Delta_a(d)\). Then the right hand side of \eqref{eq:tournament-degree-bound} is a sum of \(\binom{m}{2}\) entries of the matrix \(\Delta\), no two from the same position. We can check by definition that \(\Delta_a(d) = d-1\) for \(d \leq m-a\) and \(\Delta_a(d) = d\) for \(d > m-a\). Hence, the \(a\)-th row of \(\Delta\) has entries \(0,1,\ldots,m-1\) with \(m-a\) removed, hence each integer \(0,1,\ldots,m-1\) appears exactly \(m-1\) times in \(\Delta\). Taking the \(\binom{m}{2}\) smallest entries from \(\Delta\) and summing them up, we get a lower bound \begin{equation}\label{eq:telescoping-lower-bound}
    \sum_{a=1}^m f_a(d_a) \geq \begin{cases}
       (2q-1)q(q-1)/2 & \text{if \(m=2q\),}\\
       q^3 & \text{if \(m=2q+1\).}
    \end{cases}
\end{equation} Using \eqref{eq:hill-number}, we can check that the right hand side of \eqref{eq:telescoping-lower-bound} is exactly \(H(m+2) - H(m)\). Since \(m = n-2\), combining this with \eqref{eq:new-np-tournament} and \eqref{eq:tournament-degree-bound} gives \(\np(R) \geq \np(R[U]) + H(n) - H(n-2)\), as required.
\end{proof}

In view of Theorem~\ref{thm:antipodal}, one might hope that any
rotation system could be modified somehow into one with an antipodal pair
without increasing the number of non-planar four-subsets. A natural question in this direction is what such modifications should be. It is tempting to take local moves such as transposing two adjacent entries or reversing one cyclic order. After some computational experiments, we have found the following rotation system of order seven:
\[\begin{pmatrix}
2&3&4&5&6&7\\
1&7&4&6&5&3\\
1&2&5&7&6&4\\
1&3&6&2&7&5\\
1&4&7&3&2&6\\
1&5&2&4&3&7\\
1&6&3&5&4&2
\end{pmatrix}.\] For this rotation system, transposing any adjacent entries up to four times or reversing any single row would increase the number of non-planar four-subsets, or trivially get back to the same rotation system. This suggests that a general modification rule may be complicated.

\end{document}